\documentclass[aps,twocolumn,pra,groupedaddress,superscriptaddress,nofootinbib]{revtex4-2}

\usepackage{amsthm}
\usepackage{graphicx,amsmath,amssymb,amsbsy,subfigure,hyperref,bbm,times,txfonts,float}
\usepackage{subfigure,hyperref,bbm,times}
\usepackage[T1]{fontenc}
\usepackage{braket}
\usepackage{amsthm}

\usepackage{epsfig} 
\usepackage{color}
\usepackage{graphicx}% Include figure files
\usepackage{dcolumn}
\usepackage{bm}% bold math
\usepackage[normalem]{ulem}

\newtheorem{definition}{Definition}
\newtheorem{lemma}{Lemma}

\newtheorem{conjecture}{Conjecture}
\newtheorem{observation}{Observation}

\providecommand{\openone}{\leavevmode\hbox{\small1\kern-3.8pt\normalsize1}}

\usepackage{soul}

\hypersetup{
   colorlinks=true,
   linkcolor=blue,
}
\graphicspath{{figure/}}

\begin{document}

\title{Variational Approach for Uniform Quantum Permutation Generators}

\author{Farzam Nosrati}
\affiliation{Nexus for Quantum Technologies, University of Ottawa, K1N 5N6, Ottawa, ON, Canada}
\affiliation{IMDEA Networks Institute, Madrid, Spain}
%\orcid{}

\author{Nicolás Borrajo}
\affiliation{IMDEA Networks Institute, Madrid, Spain}
\affiliation{ETH, Zurich, Switzerland}

\author{Antonio {Fernández Anta}}
\affiliation{IMDEA Networks Institute, Madrid, Spain}
\affiliation{IMDEA Software Institute, Madrid, Spain}

\author{Vincenzo Mancuso}
\affiliation{Dipartimento di Ingegneria, Universit\`{a} degli Studi di Palermo, Viale delle Scienze, 90128 Palermo, Italy}
\affiliation{IMDEA Networks Institute, Madrid, Spain}
\begin{abstract}
Uniform permutation generation is a fundamental task in both classical and quantum computation, with applications ranging from cryptography to quantum optimization and quantum error correction. Existing exact quantum constructions typically require all-to-all qubit connectivity and quadratic circuit depth. We develop a variational quantum circuit framework for uniform permutation generation under connectivity constraints, in which the circuit architecture is determined by the underlying interaction graph and the variational parameters are optimized to enforce the target permutation statistics. In particular, we present explicit controlled-SWAP-based unitary constructions that achieve exact uniformity with quadratic circuit size and linear depth \(O(n)\) on linear nearest-neighbor topologies. Our approach, therefore, removes the need for all-to-all connectivity while improving the depth of previous exact constructions by a factor. We further prove that a quantum Bene\v{s}-like architecture is intrinsically non-uniform. Despite its logarithmic depth and ability to realize any permutation it cannot generate a uniform distribution over permutations for any choice of variational parameters. These results clarify the role of circuit topology in exact permutation generation and identify variational quantum circuits as a natural framework for hardware-constrained uniform sampling. More broadly, this work suggests that exact uniform permutation generation is a strictly stronger requirement than mere permutation realizability, and lays the groundwork for a formal complexity separation between the two.
\end{abstract}

\maketitle

\section{Introduction}

Random permutation generation is a fundamental primitive in classical computer science. Its objective is creating a random arrangement of a finite set of elements such that each possible arrangement is equally likely. Efficient random permutation generation has attracted considerable attention due to its wide range of practical applications.  One example is the technique of randomly hopping frequencies between cellular phones and their base stations, which makes it extremely difficult to eavesdrop on a connection~\cite{ristic2022}. Another area where random permutations play a fundamental role is in the security of many cryptographic systems, where they are used to construct block ciphers that are resistant to various types of attacks~\cite{czumaj2015random}. 
Randomized algorithms also employ random permutations to ensure that their behavior cannot be predicted based on a specific input order~\cite{motwani95}. 
Random permutations also find widespread applications across diverse fields, including spatial data analysis~\cite{Computation_free_nonparametric_spatial}, network routing and traffic management~\cite{iwankowicz2021permutation}, the analysis of DNA and RNA sequences in biology~\cite{forsdyke1994shuffling}, among others.

Classically, a uniform random permutation can be generated by the Fisher--Yates shuffle \cite{fisher1963statistical}, which samples each of the \(n!\) permutations with equal probability by realizing \(n(n-1)/2\) transpositions. By contrast, the information-theoretic minimum number of \emph{binary routing decisions} needed to distinguish among all \(n!\) possible permutations is only \(\log_2(n!)=\Theta(n \log n)\)~\cite{cormen2022introduction}. This gap motivates the use of structured routing architectures. In particular, multistage networks such as the Bene\v{s} network can realize any prescribed permutation using \(n(\log_2 n - 1/2)\) binary \(2\times2\) switches, while any universal architecture built from such binary switches must contain at least \(\log_2(n!)\) elements. Bene\v{s}-type networks are therefore asymptotically optimal for universal permutation routing. Uniform generation, however, is a strictly stronger requirement than universal routability: a routing network guarantees that every permutation can be realized, but it does not by itself ensure that local switch choices induce the uniform distribution over \(S_n\). 

Because permutation symmetry appears in many parts of quantum information science, the problem of translating random permutations into the quantum setting has attracted considerable attention. Quantum versions of permutation generation arise in areas as diverse as quantum error correction~\cite{ouyang2026theoryquantumerrorcorrection}, quantum cryptography~\cite{Renner2007Symmetry}, and quantum optimization~\cite{10.1145/780542.780546}.

An explicit circuit-level construction for preparing a uniform superposition over all permutations was first given by Barenco \emph{et al.}~\cite{barenco1997stabilization}. Their method uses coherent controlled-SWAP operations and may be viewed as a quantum counterpart of a classical switching network: instead of random switch settings, quantum interference creates a superposition of all reorderings with identical amplitudes. Since then, several alternative frameworks have been developed, motivated by different goals such as clearer structure, reduced resource overhead, or better suitability for particular algorithmic settings. Representative examples include mixed-radix encodings of permutations~\cite{chiew2019graph}, constructions based on Lehmer codes~\cite{Marsh2020}, a general formalization of permutation oracles~\cite{10.1145/3717823.3718266}, quantum variants of the Steinhaus--Johnson--Trotter procedure for permutation sampling~\cite{11089489}, and insertion-based methods designed to encode dynamic-programming formulations of combinatorial optimization problems directly into permutation superpositions~\cite{BAI2025115423}. More recently, Binkowski \emph{et al.}~\cite{binkowski2025quantum} introduced a quantum version of the classical Fisher--Yates shuffle, leading to a family of algorithms for preparing coherent permutation superpositions on composite registers.

These contributions highlight the richness of existing approaches to permutation generation in quantum circuits; however, they also reveal a common structural limitation. In fact, every exact uniform construction currently known assumes full connectivity. In practice, however, near-term quantum devices do not permit arbitrary long-range interactions, and the resulting topology constraints fundamentally alter the permutation-generation problem. An important question is no longer simply whether every permutation is realizable, but whether local gate choices can be arranged so that the induced distribution over all \(n!\) permutations is exactly uniform. Such a problem is inherently combinatorial and strongly dependent on the underlying interaction graph. 

This observation motivates the introduction of a variational quantum circuit (VQC), by which we mean a parameterized quantum circuit with tunable gate parameters optimized classically to generate permutations uniformly. Parameterized quantum circuits are the core quantum component of variational quantum algorithms \cite{Cerezo2021,mcclean2016theory}, and have been used in applications ranging from ground-state estimation in quantum chemistry \cite{peruzzo2014variational,yordanov2020efficient} to combinatorial optimization, most prominently through the Quantum Approximate Optimization Algorithm (QAOA) \cite{farhi2014quantum,perez2024variational}. Their appeal lies in the hybrid quantum--classical paradigm: a shallow, hardware-compatible quantum circuit defines a family of candidate transformations, while a classical optimizer adjusts the circuit parameters to minimize a prescribed cost function. In the present setting, this flexibility allows the circuit architecture to respect local connectivity constraints while optimizing for exact uniformity over permutations.

In this work, we study uniform permutation generation within a variational quantum circuit framework. We develop a general method for topology-constrained circuit design in which the circuit architecture is determined by the available qubit connectivity encoded by the underlying interaction graph, while the variational parameters are optimized to enforce the target permutation statistics. Unlike previous approaches, our analysis does not assume fully connected architectures, an important distinction since all-to-all connectivity is rarely available on realistic quantum hardware.

We specialize this framework for linear nearest-neighbor topologies, where subsystems are arranged in a chain, and only adjacent swaps are permitted. In this setting, we construct a circuit that achieves exact uniformity with linear depth, \(O(n)\), thereby improving the tightest known quadratic depth and all-to-all connectivity. This yields a linear-factor improvement over \(O(n^2)\)-depth constructions. Finally, we prove that a quantum Bene\v{s}-like architecture is intrinsically non-uniform: despite its depth \(O(\log_2 n)\), it cannot generate the uniform distribution over permutations for any choice of variational parameters.

\section{Uniform Quantum Permutation}
In this work, we consider a composite quantum system composed of $n$ subsystems (e.g., qubits), each associated with a local Hilbert space ${\cal H}$. The global Hilbert space is therefore given by ${\cal H}^{\otimes n}$. A pure configuration of the system can be written as a product state $\ket{\psi} = \ket{\psi_1} \otimes \ket{\psi_2} \otimes \dotsb \otimes \ket{\psi_n}$,
where $\ket{\psi_i} \in {\cal H}$ denotes the state of the $i$-th subsystem.

To formalize the action of permutations on such product states, we define, for each $\pi \in S_n$, a unitary operator $U_\pi$ acting on ${\cal H}^{\otimes n}$ as 
\begin{equation}
    U_\pi \ket{\psi_1, \psi_2, \dotsc, \psi_n}
= \ket{\psi_{\pi^{-1}_{1}}, \psi_{\pi^{-1}_2}, \dotsc, \psi_{\pi^{-1}_n}}.
\end{equation}
The set $\{ U_\pi : \pi \in S_n \}$ constitutes a unitary representation of the symmetric group $S_n$, satisfying \( U_{\pi_1} U_{\pi_2} = U_{\pi_1 \pi_2}\) and \(U_\pi^\dagger = U_{\pi^{-1}} \).
This representation captures all possible reorderings of the $n$ subsystems within the composite quantum system.  We consider the task of coherently generating an equal-weight superposition over all permutations of a product state.

\begin{definition} \label{def:Unitaryuniform}
Let \(\ket{\psi}=\bigotimes_{i=1}^n \ket{\psi_i}\) be a product state over \(n\) subsystems, where \(\ket{\psi_i}\in\mathcal{H}_i\), and let \(\ket{0}_a=\bigotimes_{j=1}^m \ket{0}_j \in \mathcal{H}_a\) denote a fixed ancillary state on \(m\) auxiliary subsystems. Also, consider a unitary operator \(\hat{\mathcal{U}}\) acting on joint Hilbert space. We say that \(\hat{\mathcal{U}}\) generates a uniform superposition over all permutations of \(\ket{\psi}\) if
\begin{equation}\label{eq:def1}
\hat{\mathcal{U}}\ket{\psi}\otimes\ket{0}_a
=
\frac{1}{\sqrt{n!}}
\sum_{\pi\in S_n}
\ket{\psi_{\pi^{-1}_1},\ldots,\psi_{\pi^{-1}_n}}\otimes \ket{\varphi_\pi}_a
\end{equation}
where \(S_n\) is the symmetric group on \(n\) elements and \(\ket{\varphi_\pi}_a\in\mathcal{H}_a\) is the ancillary state associated with permutation \(\pi\).
\end{definition}

Here, we propose quantum circuits for generating a uniform superposition of all permutation in the symmetric group \(S_n\) using swap operations (transposition) between subsystem.  

From Definition~\ref{def:Unitaryuniform}, one also finds the completely positive trace-preserving (CPTP) map \(\mathcal{M}(\hat{\rho})=\mathrm{Tr}_a\left(\hat{\mathcal{U}}\hat{\rho}_{0}\hat{\mathcal{U}}^\dagger\right)\) with \(\hat{\rho}_{0}=\ket{\psi}\bra{\psi}\otimes\ket{0}_a\bra{0}_a\) by tracing over the auxiliary subsystems. This yields 
\begin{equation}\label{eq: G quantum consensus}
     \mathcal{M}(\hat{\rho})= \frac{1}{n!} \sum_{\pi \in S_n} \hat{U}_\pi \, \hat{\rho}_0 \, \hat{U}_\pi^\dagger,
\end{equation}
where \(\hat{U}_\pi\) implements the permutation \(\pi\) in the symmetric group \(S_n\).

The complexity of the proposed circuit is evaluated using three resource measures: circuit size, circuit depth, and ancillary-qubit count. The \textit{circuit size} is the total number of controlled-SWAP gates, considered the elementary operation in the permutation circuits studied here. Each controlled-SWAP can be decomposed into two CNOT gates and one Toffoli gate, and a Toffoli gate itself admits a standard decomposition into six CNOT gates and nine single-qubit gates. Thus, this cost model remains compatible with conventional gate-counting frameworks. The \textit{circuit depth} is the number of sequential layers required to execute the circuit under the assumption that gates acting on disjoint qubits may be applied in parallel. Finally, the \textit{ancillary-qubit count} is the number of auxiliary qubits required to control the swap operations acting on the input subsystems.

To implement arbitrary unitary transformations within hardware gate-set and connectivity constraints, we use the standard qubit-register encoding of finite-dimensional subsystems~\cite{Barenco1995Elementary,nielsen2010quantum}.

\section{Qubit-level decomposition of subsystem operations}

A standard approach to implementing arbitrary unitary operations on composite quantum systems is to express them in terms of elementary gates acting on qubits, in a manner compatible with the gate sets and connectivity constraints of current hardware platforms~\cite{Barenco1995Elementary,nielsen2010quantum}. 

In this representation, each \(d\)-dimensional subsystem is encoded into a register of \(\lceil \log_2 d \rceil\) qubits, which is the minimum number of qubits needed to represent its computational basis states in binary form. This encoding provides a concrete bridge between abstract subsystem-level operations and their realization at the physical qubit level. Under this decomposition, operations between subsystems are implemented register-wise. In particular, swapping two \(d\)-dimensional subsystems requires \(\lceil \log_2 d \rceil\) SWAP gates, applied pairwise between the corresponding qubits of the two registers. Likewise, a controlled-SWAP between two subsystems is implemented by conditioning each of these \(\lceil \log_2 d \rceil\) constituent SWAP gates on the same ancillary control qubit.

To simplify the presentation, we will henceforth depict each encoded register as a single subsystem wire. Accordingly, subsystem-level SWAP and controlled-SWAP operations should be understood as shorthand for the corresponding qubit-level implementations described above.

\begin{figure*}
    \centering
    \includegraphics[width=0.99\linewidth]{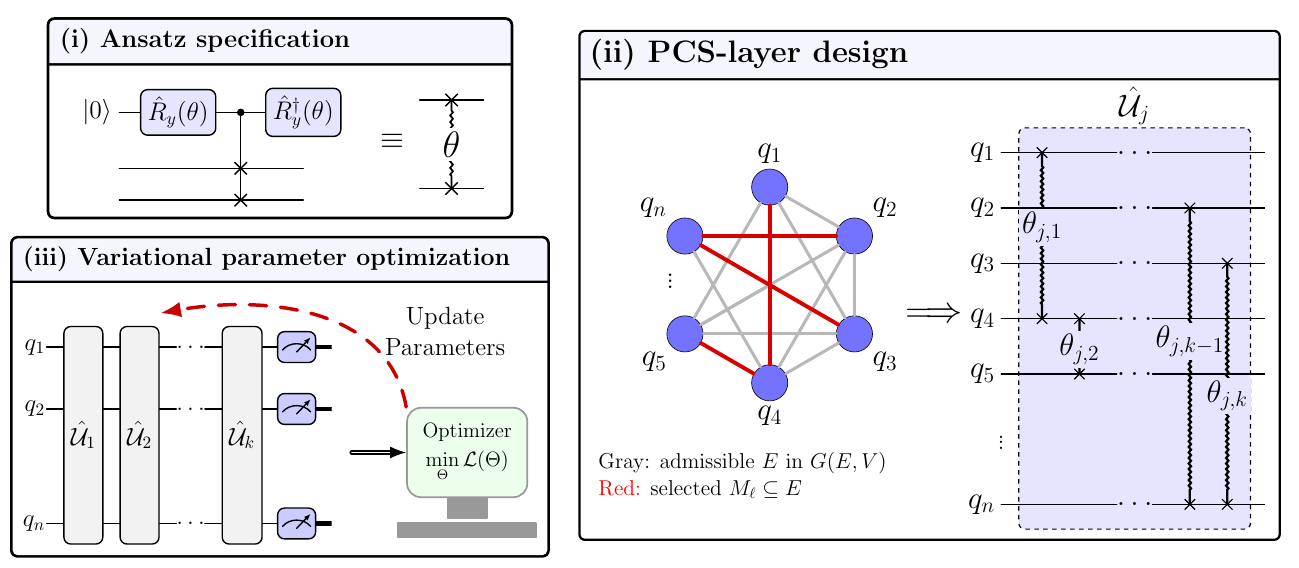}
    \caption{\textbf{Three-stage variational quantum-circuit construction for uniform permutation generation.} (i) Each candidate swap \(\hat U_{ij}\) is realized by a parameterized controlled-SWAP (PCS) gate, with the control qubit prepared by \(\hat R_y(\theta)\).  (ii) The PCS-layer architecture is determined by an interaction graph \(G=(V,E)\); at each layer \(\ell\), an ordered subset \(M_\ell\subseteq E\) is chosen to specify the active PCS operations. Gray edges represent admissible interactions, and red edges the selected ones.  (iii) For the fixed layered architecture, the variational parameters are optimized by minimizing \(\mathcal{L}(\Theta)\), so that the resulting channel approximates the target uniform permutation map.}
    \label{fig:VAC}
\end{figure*}

\section{Variational quantum circuit}

A variational circuit consists of a fixed gate sequence (an ansatz) with tunable continuous parameters, which are optimized to minimize a cost function quantifying the deviation from the target uniform permutation generator. The variational proceed in the following three stages:

\textbf{(i) Ansatz specification:} The circuit employs \(m\) ancillary qubits, each controlling a potential swap \(\hat{U}_{ij}\). Each such potential swap \(\hat{U}_{ij}\) is represented by a parameterized controlled-SWAP (PCS) operator. This gate generalizes the standard controlled-SWAP by allowing the control qubit to be initialized through a rotation about the Pauli-\(Y\) axis. Specifically, the control qubit is prepared using \(\hat{R}_y(\theta)=e^{-i\theta \hat{Y}/2}\), so that
\(\hat{R}_y(\theta)\ket{0}
=\cos\!\left(\frac{\theta}{2}\right)\ket{0}+\sin\!\left(\frac{\theta}{2}\right)\ket{1}.
\)
The parameter \(\theta\) determines the amplitudes of the two branches and therefore the probability that the swap is activated. In particular, the probability that the target subsystems remain unchanged is \(\cos^2\!\left(\frac{\theta}{2}\right)\). 

Figure~\ref{fig:VAC}~(i) illustrates this operator. These amplitudes are treated as free parameters of the variational ansatz and are collected in the parameter vector \(\Theta = (\theta_1, \dots, \theta_m)\).

\textbf{(ii) PCS-layer design:} In the second step, we design the layered PCS architecture, as illustrated in Figure~\ref{fig:VAC}~(ii).  The admissible PCS operations are specified by an interaction graph \(G=(V,E)\), where each vertex in \(V\) represents a target subsystem and each edge in \(E\) indicates that a direct PCS operation can be applied between the corresponding pair. The graph \(G\) therefore determines the circuit connectivity.

For each layer \(\ell\), we select a collection of admissible edges from \(E\), denoted by \(M_\ell\) as an ordered list of selected admissible edges, since the corresponding operations need not commute in general. PCS operations are then applied on the selected pairs in this prescribed order.

A minimal requirement for the variational ansatz is that the layered PCS architecture be able to realize every permutation of the \(n\) target subsystems, i.e., every element of the symmetric group \(S_n\). Equivalently, for every \(\pi \in S_n\), the ansatz should have support on the corresponding permutation unitary \(\hat U_\pi\).

To ensure this, we assume that each active PCS operation has nonzero support on both the identity branch (no swap) and the transposition branch associated with its edge. In our parametrization, this is ensured by choosing \(\theta \in (0,\pi)\), so that neither branch is excluded. For a given layer \(\ell\), the set of permutations generated by that layer is therefore
\begin{equation}
\mathcal{S}_\ell=\left\{\prod_{(i,j)\in M_\ell} \sigma_{ij}\;\middle|\;\sigma_{ij}\in\{\hat{I},\hat{U}_{i\,j}\}\right\}\subseteq S_n
\end{equation}
where \(\hat{I}\) denotes the identity permutation and \(\tau(e_{\ell,r})\) denotes the transposition associated with the selected edge \(e_{\ell,r}\). Thus, \(\mathcal{S}_\ell\) is the set of all permutations that can be generated by the PCS block in layer \(\ell\). The corresponding set of permutation unitaries is \(\{\hat U_\sigma \mid \sigma \in \mathcal{S}_\ell\}\).

To characterize the permutations reachable after multiple layers, we introduce the set product. 

\begin{definition}[Set product]\label{def:set_product}
For subsets \(A,B \subseteq S_n\), define
\begin{equation}
A\cdot B=\{\, a\circ b \mid a\in A,\; b\in B\,\},
\end{equation}
where \(a\circ b\) denotes composition, with \(b\) applied first and \(a\) applied second.
\end{definition}

The set of permutations reachable after \(k\) layers is then \(\mathcal{G}_k=\mathcal{S}_k \cdot \mathcal{S}_{k-1}\cdots \mathcal{S}_1\). The architectural design criterion is that there exists \(k\) layers such that \(\mathcal{G}_k = S_n\). In other words, the sequence of selected edge sets \(\{M_\ell\}_{\ell=1}^k\) must be chosen such that successive compositions of the corresponding PCS layers generate the full symmetric group. Once this structural condition is satisfied, the variational parameters can be used to control the induced distribution over permutations.

The architectural design criterion is therefore to choose the sequence \(\{M_\ell\}_{\ell=1}^k\) such that \(\mathcal{G}_k=S_n\). In principle, the layer structure \(\{M_\ell\}_{\ell=1}^k\) could itself be treated as a variational object, with the VQC searching for an architecture such that \(\mathcal{G}_k=S_n\). This is not our aim here. Instead, we choose PCS-layer designs that already generate the full symmetric group by construction. The subsequent optimization step is then used only to shape the induced distribution over \(S_n\), ideally toward the uniform distribution, rather than to discover a permutation-generating architecture.

\textbf{(iii) Cost-function minimization:} Once the PCS-layer design is fixed, the remaining task is to optimize the variational parameters \(\Theta\) so that the induced channel approximates the target uniform permutation channel \(\mathcal{M}\) in Eq.~\ref{eq: G quantum consensus}. For each \(\Theta\), the circuit defines a channel \(\mathcal{E}_\Theta\) through the prescribed PCS architecture on \(G=(V,E)\), followed by tracing out the ancillary degrees of freedom. We therefore define the cost function
\begin{equation}\label{eq:costfunction}
    \mathcal{L}(\Theta)
    =
    \frac{1}{2}
    \sup_{\hat{\rho}_0}
    \left\|
         \mathcal{E}_\Theta(\hat{\rho}_0)
         -
         \mathcal{M}(\hat{\rho}_0)
    \right\|_1,
\end{equation}
Here, \(\|\cdot\|_1\) is the trace norm, and the supremum over \(\hat{\rho}_0\) captures the worst-case deviation between the two channels on input states. Hence, the optimization does not search for a circuit architecture that generates permutations; that structural requirement has already been enforced at the design stage. Instead, it adjusts the parameters within the fixed architecture so that the induced distribution over permutations approaches the uniform one (see Figure~\ref{fig:VAC}~(iii)). An optimal parameter set \(\Theta^*\) is obtained by minimizing \(\mathcal{L}(\Theta)\). If the minimum satisfies \(\mathcal{L}(\Theta^*)=0\), then \(\mathcal{E}_{\Theta^*}=\mathcal{M}\), i.e., the variational circuit exactly realizes the target channel.

In order to apply the cost-function minimization approach to uniform quantum permutation generation, we replace the cost function of Eq.~\ref{eq:costfunction} by a state-independent projection criterion. The symmetric projector associated with the permutation action is \(\Pi_{\rm sym}=\frac{1}{n!}\sum_{\pi\in S_n}\hat{U}_\pi\), where \(\hat{U}_\pi\) denotes the representation of the permutation \(\pi\) on the relevant state space. Denoting by \(P(\Theta)\) the PCS-induced operator in the same representation, we minimize
\begin{equation}
    \mathcal{L}_{\rm proj}(\Theta)
    =
    \|\Pi_{\rm sym}-P(\Theta)\|_F^2 .
\end{equation}
Here, \(\|\cdot\|_F\) denotes the Frobenius norm. Since both operators act on the same representation space, \(\mathcal{L}_{\rm proj}=0\) is equivalent to \(P(\Theta)=\Pi_{\rm sym}\), and hence to exact realization of the target symmetric projection. Conversely, \(\mathcal{L}_{\rm proj}>0\) implies that the PCS-induced map cannot coincide with the target projection on all inputs. Thus, this loss provides a computationally efficient surrogate whose zeros coincide with exact realization of the target map, although its magnitude is not the worst-case trace-distance channel error.

\section{Linear topology}
In this section, we apply the variational framework based on a linear topology and evaluate the uniformity of existing topologies. The first constraint is that the PCS-layer design set should be disjoint, i.e., the selected edges in the graph should be a matching that is admissible from the graph connectivity. This will reduce the depth quantum circuit to $k$ layers.

The simplest graph connectivity is the linear topology. In this setting, each subsystem \(i\) is allowed to exchange its quantum state exclusively with its nearest neighbors, \(i-1\) and \(i+1\) (whenever they exist). The resulting architecture forms a one-dimensional chain that restricts interactions to strictly local, nearest-neighbor exchanges between adjacent subsystems. While this may seem restrictive, a key feature of this construction is its applicability to any connected graph that can embed a linear ordering of subsystems (via a Hamiltonian path or other linear arrangements~\cite{bondy2008graph}). This is particularly relevant given that most current quantum architectures admit Hamiltonian paths, thereby enabling the use of quantum circuits designed for the linear topology on such platforms. Among the architectures admitting Hamiltonian paths are lattice or grid structures (as in IBM’s Eagle chip), all-to-all topologies (used in trapped-ion and photonic implementations), heavy-hex/hexagonal lattices (employed in IBM’s Falcon and Hummingbird chips), and linear chains (common in trapped-ion devices), among others.

We describe the step-by-step construction of the linear quantum circuit in the followign steps. First, the control register is prepared by applying independent rotations about the Pauli-\(Y\) axis on each of the \(n(n-1)/2\) ancillary qubits that are used, in a specific pattern that produces a uniform superposition.

\begin{figure*}
    \centering
    \includegraphics[width=0.95\linewidth]{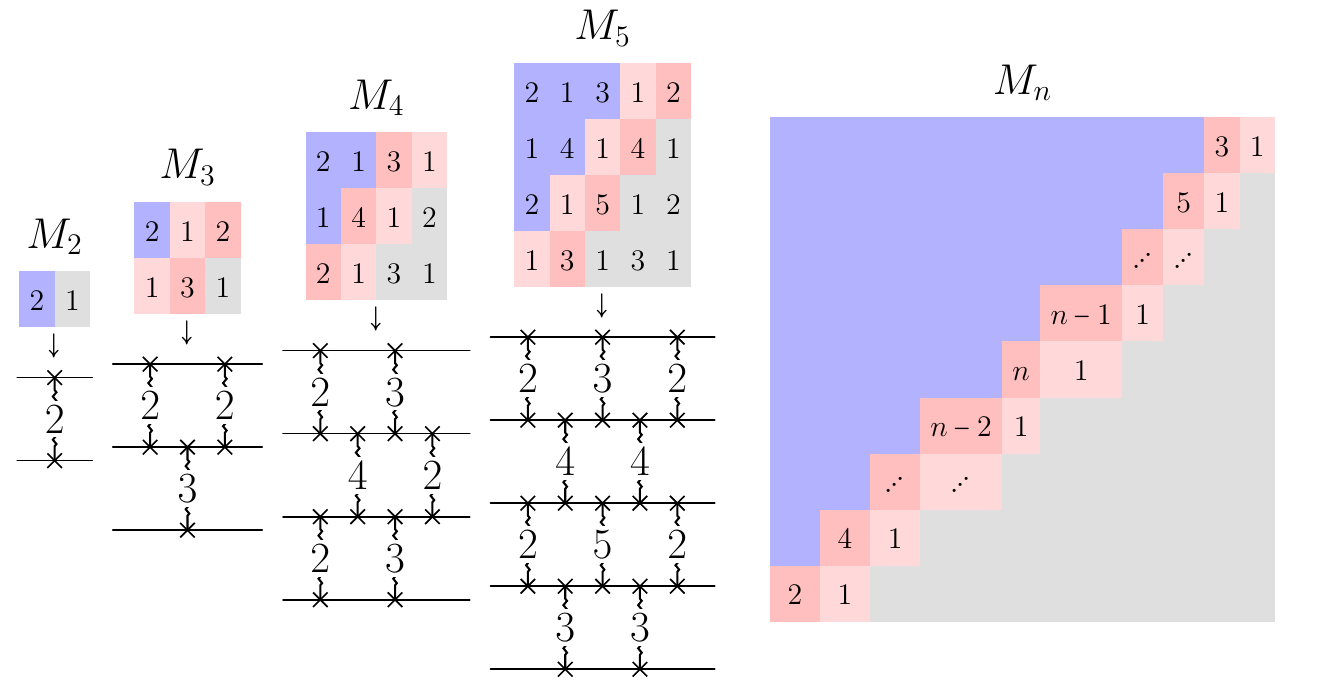}
\caption{\textbf{Recursive construction of the non-swap probability matrices \(M_n\) and their corresponding linear-topology circuits.} (\textbf{Top}) Examples for \(n=2,3,4,5\) are shown. Each \(M_n\) is obtained from the upper and lower blocks inherited from \(M_{n-1}\), shown in blue and gray, together with two newly added anti-diagonals, shown in red. For even \(n\), the numbered anti-diagonal starts with even entries \(2,\ldots,n\) followed by odd entries \(n-1,\ldots,3\), while the other anti-diagonal contains only \(1\)s. For odd \(n\), the roles of the two anti-diagonals are interchanged. (\textbf{Bottom}) The associated quantum circuits implementing the same recursive linear topology.}
    \label{fig:Inductive}
\end{figure*}

Second, the circuit applies a sequence of  \(n\) layers (except for the case $n=2$ that has only $1$ layer, see Figure \ref{fig:Inductive}), arranged to emulate the \emph{compact} linear topology illustrated in Figure \ref{fig:Inductive}. In this compact linear topology, each layer applies \(\lfloor n/2 \rfloor\) controlled-swap gates between adjacent subsystems. Odd layers act on pairs \((1,2), (3,4), \ldots, (2\lfloor \frac{n}{2} \rfloor-1,2\lfloor \frac{n}{2} \rfloor)\), while even layers are shifted by one position and act on pairs \((2,3), (4,5), \ldots, (2\lfloor \frac{n-1}{2}\rfloor,2\lfloor \frac{n-1}{2}\rfloor+1)\). The complete circuit alternates these layers, resulting in an overall circuit depth of \(n\).

Before addressing uniformity optimization and the associated variational analysis, we first establish a structural result: the minimal circuit depth required for a linear topology to generate all permutations in the symmetric group \(S_n\). This is stated in the following lemma, whose the proof has been deferred to Appendix~\ref{appenidx1}.

\begin{lemma}
\label{lemma:permutationI}
Let \(S_n\) denote the symmetric group on \(n\) elements. Define the two sets of disjoint adjacent-swap layers 
\begin{align}
E_1&=\left\{
\prod_{j=1}^{\lfloor \frac{n}{2}\rfloor} W_j
\;\middle|\;W_j\in \{\hat{I},\hat U_{2j-1,2j}\}
\right\},\\ 
E_2 &=\left\{\prod_{j=1}^{\lfloor \frac{n-1}{2} \rfloor} W_j
\;\middle|\; W_j\in \{\hat{I},\hat U_{2j,2j+1}\} \right\}.
\end{align}
Then every permutation \(\pi\in S_n\) can be written as an element of \((E_1\cdot E_2)^k,\) with \(k=\frac n2 \) when \(n\) is even, and as an element of  \((E_1\cdot E_2)^k\cdot E_1\) with \(k= \frac{n-1}{2} \). when \(n\) is odd.
\end{lemma}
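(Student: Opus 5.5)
This is the classical fact that odd–even transposition sort on $n$ elements terminates after $n$ rounds of alternating odd and even compare–exchange layers. The plan is to translate that sorting statement into a statement about which permutations are reachable.

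\textbf{Step 1 (reduction to sorting).} The product $(E_1\cdot E_2)^k$ (and $(E_1\cdot E_2)^k\cdot E_1$ for odd $n$) contains exactly $n$ alternating layers. The rightmost factor acts first, so the first layer is $E_1$ when $n$ is odd and $E_2$ when $n$ is even. Each layer is closed under choosing any subset of its disjoint adjacent transpositions. Fix $\pi\in S_n$ and consider the sequence $x=(\pi^{-1}_1,\dots,\pi^{-1}_n)$, or its inverse, depending on the paper's convention. Since every layer set is closed under inversion and the layer sequence is a palindrome when read in reverse for odd $n$, the convention issue is harmless. For even $n$ we may check the reversed order separately or use the symmetric argument below.

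We run odd–even transposition sort on $x$, using the layers in the prescribed order. In a given layer, we apply the transposition on a pair $(i,i+1)$ exactly when $x_i>x_{i+1}$. These choices are legitimate elements of $E_1$ or $E_2$. If the sequence is sorted after $n$ layers, then the composition $\sigma$ of the chosen layer elements satisfies $\sigma x=\mathrm{id}$. Hence $\pi$ (or $\pi^{-1}$) equals a product of layer elements in reverse order. Because each layer element is an involution and reversing the word interchanges the roles of $E_1$ and $E_2$, we then use the fact that sorting works regardless of whether the first layer is odd or even. Either way we obtain $\pi$ as an element of the stated set.

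\textbf{Step 2 (odd–even sort finishes in $n$ rounds, for either starting parity).} We use Knuth's 0–1 principle. Compare–exchange networks commute with monotone maps, so it suffices to show that every 0–1 sequence is sorted after $n$ rounds. For the 0–1 argument, we track the 1s from right to left. The rightmost 1 is never blocked once it starts moving: it moves right at every round from round 1 or round 2 on. The $j$-th 1 from the right is delayed at most by $j$ rounds relative to the one ahead of it, but once moving it never stalls. A standard induction shows that the $j$-th 1 from the right reaches its final position $n-j+1$ by round $n$. The induction hypothesis states that the $j$-th 1 starts moving no later than round $j+1$ and thereafter moves every round until it reaches its destination. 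The distance it must travel is at most $n-j$, so it arrives by round $(j+1)+(n-j)-1=n$.

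\textbf{Step 3 (main obstacle).} The main obstacle is the precise off-by-one bookkeeping in this induction for both parities of the starting layer. I would handle it by proving the bound for the start layer that is less favorable, together with the blocking argument. I expect this to be the only delicate point; the rest consists of routine translation between sorting and the set products of Definition~\ref{def:set_product}.
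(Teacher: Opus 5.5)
Your overall route is sound: you realize each $\pi$ by running odd--even transposition sort through the prescribed layers and reduce to $0$--$1$ inputs. It is genuinely different from the paper's proof. The paper decomposes transpositions into palindromic adjacent-swap words, counts how far an element can travel per $E_1\cdot E_2$ block, and then treats transpositions, disjoint products and $n$-cycles separately. A sorting argument handles every $\pi\in S_n$ at once, whereas a per-block distance count by itself only shows how many layers are \emph{necessary}.

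Step~1 is fine: all layer elements are involutions, and a product set equals $S_n$ iff its reversed product does.

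The gap is in Step~2, which carries the whole content of the lemma: the invariant you state is false. The claim ``once moving it never stalls'' fails, and so does your induction hypothesis read literally. Take $n=5$, first layer $E_1$, input $(1,0,1,1,0)$.
Round~1 (pairs $(1,2),(3,4)$) gives $(0,1,1,1,0)$, so the third $1$ from the right has started moving.
Round~2 (pairs $(2,3),(4,5)$) gives $(0,1,1,0,1)$. That $1$ is now stuck at position $2$ behind the second $1$, which has not started yet, although its destination is $3$.
So a $1$ can move early and then be blocked by a $1$ ahead of it that is still waiting, and the count $(j+1)+(n-j)-1=n$ is not justified as written. Also, a relative delay of $j$ rounds per $1$ would accumulate quadratically; the correct relative delay is one round.

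The repair is to drop any reference to when a $1$ ``starts'' and prove instead $H_j$: \emph{in every round $t\ge j+1$ in which the $j$-th $1$ from the right is not yet at position $n-j+1$, it moves right.} Induct on $j$, and let the $1$ sit at $q<n-j+1$ at the start of round $t$.

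The pair $(q,q+1)$ lies in exactly one of $E_1,E_2$, so it is active in alternate rounds. Suppose it were inactive in round $t$; then it was active in round $t-1$.
\begin{itemize}
\item The $1$ could not have entered $q$ during round $t-1$, since that requires $(q-1,q)$ to be active.
\item So it sat at $q$ with an active right pair and was blocked by the $(j-1)$-th $1$.
\item That $1$ was idle in round $t-1\ge j$, hence by $H_{j-1}$ already at $n-j+2$, forcing $q=n-j+1$, a contradiction.
\end{itemize}
So $(q,q+1)$ is active in round $t$. If position $q+1$ held the $(j-1)$-th $1$, that $1$ would be idle in round $t$, giving the same contradiction. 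For $j=1$ neither blocking scenario can occur, since the right neighbour is a $0$.

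Hence from round $j+1$ on, the $1$ advances every round until it is home. It needs at most $n-j$ moves, and earlier moves only help, so it is home after round $n$. Nothing here depends on which layer is applied first, so the parity case split you anticipate in Step~3 is unnecessary. With this invariant your proof is complete.
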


This lemma establishes that a depth-\(n\) circuit on a linear topology is capable of generating the full symmetric group \(S_n\). 

\begin{figure}[t!]
    \centering
    \includegraphics[width=0.7\linewidth]{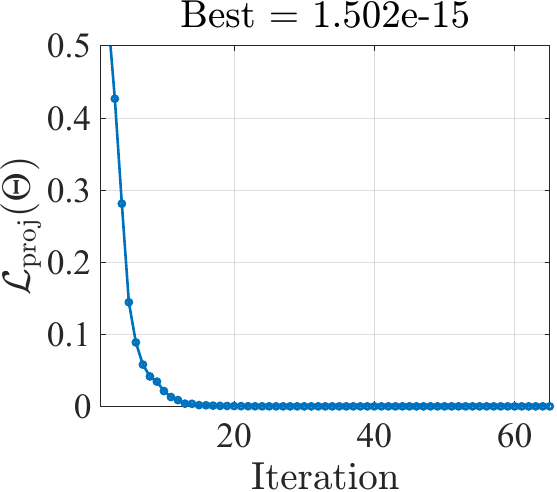}
   \caption{Projection-loss optimization for the line-topology PCS ansatz on \(S_8\). The best run over multiple random initializations reaches a residual of order \(10^{-16}\).}
\label{fig:linearopt}
\end{figure}

We now turn to the variational optimization of the PCS parameters toward the uniform permutation channel. As shown in Fig.~\ref{fig:linearopt} for the specific case of \(S_8\)---but we remark that we have observed the same in all tested cases--- optimization of the projection loss for the swap sequence associated with the line topology converges to numerical precision. The plotted curve corresponds to the best run over multiple random initializations, with a final residual of order \(10^{-16}\). This demonstrates that the line-topology PCS ansatz realizes the target symmetric projector up to numerical precision. Consequently, the resulting circuit implements the uniform permutation average over \(S_8\).

Remarkably, the optimized solutions exhibit a simple inductive structure. More precisely, each PCS gate acting on neighboring subsystems \((i,j)\) is assigned swap probability \(p_{ij}=\cos^2\!\left(\frac{\theta_{ij}}{2}\right)=1-\frac{1}{m_{ij}}\), where the coefficients \(m_{ij}\) are arranged in an \((n-1)\times n\) matrix \(M_n\)  whose elements are exactly integer numbers between $1$ and $n$. We remark that a $1$ in the matrix represents a zero-probability swap, hence a connection between subsystems which is not used in a specific layer of the controlled-SWAP operations.
The corresponding non-swap probability that emerges from optimization is therefore \(1/m_{ij}\). 
Each entry of \(M_n\) therefore specifies the reciprocal of the non-swap probability associated with a nearest-neighbor PCS gate. The structure of \(M_n\) shows that longest swap paths (the diagonal ones) happen with probability $\frac{1}{n!}$ obtained as the product of all the reciprocals of the factors in $n!$. However, for permutations that can be obtained as a superposition of multiple alternative swap combinations, the result is neither obvious nor intuitive, still we observe that it holds.

The observation of numerical optimization results leads to the identification of a recurrent structure in the matrix $M_n$, which allows to build it recursively starting from $n=1$. The construction starts with the base matrix
\(M_2 = [\,2\;\;1\,]\) with two subsystems, where each entry $x$ represents a swap probability of $1-1/x$ (i.e., the entry \(1\) denotes the no-swap choice). For \(n>2\), the matrix \(M_n\) is obtained recursively by combining two copies of the \((n-1)\)-system pattern, shown as the upper and lower blocks in Fig.~\ref{fig:Inductive}, with two additional anti-diagonals of length \(n\) running from lower left to upper right. For \(n=3\), the construction starts from \(M_2=[\,2\;\;1\,]\), and two new
anti-diagonals are inserted. The left anti-diagonal consists of entries equal to \(1\), while the right anti-diagonal contains the sequence \(3,2\), read
from bottom to top. Thus, the new entries added at this step are the anti-diagonal pairs \((1,3)\) and \((1,2)\), as shown in red in Fig.~\ref{fig:Inductive}. For \(n=4\), the construction uses the pattern for \(n=3\) as the inherited upper and lower blocks. Since \(n=4\) is even, the left anti-diagonal contains
the sequence \(2,4,3\), while the right anti-diagonal consists of entries equal to \(1\). Therefore, the red anti-diagonal pairs added at this step are
\((2,1)\), \((4,1)\), and \((3,1)\), placed along anti-diagonals running from lower left to upper right.

The entries on these two new anti-diagonals depend on the parity of \(n\). If \(n\) is even, the left anti-diagonal contains the sequence \([2,4,\ldots,n,\; n-1, n-3,\ldots,3]\), whereas the right anti-diagonal
consists only of entries equal to \(1\). If \(n\) is odd, the roles of the two anti-diagonals are interchanged: the left anti-diagonal consists only of
entries equal to \(1\), while the right anti-diagonal contains \([3,5,\ldots,n,\; n-1, n-3,\ldots,2]\). Thus, each step from \(M_{n-1}\) to \(M_n\) preserves the previously constructed blocks and inserts a new pair of parity-dependent anti-diagonals. This recursive pattern is illustrated for several system sizes in Fig.~\ref{fig:Inductive}.

The resulting variational solution appears to reproduce the uniform permutation channel exactly. We summarize the numerical evidence as follows.

\begin{observation}[Empirical correctness]
For all tested system sizes \(n\leq 12\), the compact linear circuit defined by the inductively constructed swap matrix \(M_n\) reproduces the uniform permutation channel \(\mathcal{M}\) to numerical precision, with the observed trace-distance cost in Eq.~\ref{eq:costfunction} in the order of \(10^{-15}\).
\end{observation}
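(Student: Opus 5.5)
The statement is empirical, so I plan to upgrade it to an exact proof that the compact linear circuit with parameters $1/m_{ij}$ read from $M_n$ induces the uniform distribution on $S_n$ for every $n$.

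\textbf{Step 1: reduce to a classical problem.} Each PCS gate has its own ancilla, and distinct control strings $c\in\{0,1\}^m$ give orthogonal ancilla states. Tracing out the ancillas therefore gives
\begin{equation*}
\mathcal{E}_\Theta(\hat\rho)=\sum_{c}\Pr(c)\,\hat U_{\pi(c)}\hat\rho\,\hat U_{\pi(c)}^\dagger ,
\end{equation*}
where $\Pr(c)$ is the product of the Bernoulli weights $1-1/m_{ij}$ (swap) and $1/m_{ij}$ (no swap). Here $\pi(c)$ is the product of the selected transpositions in circuit order. This is a random sorting network in the classical sense. Hence $\mathcal{E}_\Theta=\mathcal{M}$ is implied by the induced law of $\pi$ being uniform on $S_n$, and it suffices to prove that. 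The support condition is already guaranteed by Lemma~\ref{lemma:permutationI}.

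\textbf{Step 2: induction along the recursive structure of $M_n$.} I will exploit that $M_n$ consists of two inherited copies of the $M_{n-1}$ pattern plus two new anti-diagonals.

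First, rewrite the depth-$n$ brick circuit as a depth-$(n-1)$ brick circuit on $n-1$ wires, carrying exactly the $M_{n-1}$ parameters, together with a single diagonal chain of adjacent gates. This chain is the nontrivial new anti-diagonal (even $n$: the left one with entries $2,4,\dots,n,n-1,\dots,3$; odd $n$: the right one). The all-ones anti-diagonal consists of deterministic no-swaps and simply deletes gates.

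The rewriting is done by commuting gates on disjoint wires: a light-cone or Young-diagram argument for brick-wall circuits. The chain sweeps once across the wires.

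By the induction hypothesis, the $(n-1)$-wire part produces a uniform permutation. It then remains to show that the composition with the diagonal chain is uniform on $S_n$. The cleanest route is a tagged-element argument. Track the element that enters the chain, and show that its final position is uniform on $\{1,\dots,n\}$. Then show that, conditioned on that position, the relative order of the other $n-1$ elements is still uniform, because the uniform $(n-1)$-block law is invariant under the fixed relabeling the chain induces.

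\textbf{Step 3: the stopping-probability computation.} This is the main obstacle. In a plain insertion chain, the tagged element stops at the $k$-th gate with probability $\prod_{j<k}(1-1/m_j)\cdot 1/m_k$. With parameters $n,n-1,\dots,2$ this gives $1/n$ each. Here, however, the entries come in the zigzag order $2,4,\dots,n,n-1,\dots,3$, and the chain is interleaved in time with the $(n-1)$-block. So the tagged element may be displaced by block gates before and after visiting chain gates, and a plain telescoping product does not apply directly.

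What I would try first is a finer induction on the pair (tagged position, time). I will show that at each time slice of the brick wall, the tagged element's position is uniform over the set of wires it can currently reach. The ratio $1/m$ is precisely what keeps this uniform when the reachable set grows by one wire at an even or odd step, which explains the parity split of the entries.

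If this bookkeeping fails, the fallback is to verify the identity $P(\Theta)=\Pi_{\rm sym}$ through the equivalent statement that the induced random permutation is invariant under left multiplication by each adjacent transposition $(i,i+1)$. This can be checked locally by pushing the transposition through the circuit using the recursion for $M_n$.
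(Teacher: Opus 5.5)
The paper has no proof to compare against. The Observation rests only on numerical optimization for $n\le 12$, and the all-$n$ statement is left as Conjecture~1. Your Step~1 is correct and worth keeping. Each PCS gate has its own ancilla, so the ancilla branches are orthogonal. The channel is therefore the mixture $\sum_\pi \mu(\pi)\hat U_\pi(\cdot)\hat U_\pi^\dagger$, and uniformity of $\mu$ suffices.

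\textbf{Step~2 fails.} Write $s_i$ for the transposition of wires $i,i+1$ and index gates by (edge $i$, layer $\ell$), so gates sit where $i+\ell$ is even.
\begin{itemize}
\item The new nontrivial anti-diagonal of $M_n$ is the path of the particle starting on wire $n$.
\item The inherited upper block is the gates with $i+\ell<n$; it acts on wires $1,\dots,n-1$.
\item The inherited lower block is the gates with $i+\ell>n+1$; it acts on wires $2,\dots,n$.
\end{itemize}
Commuting gates on disjoint wires only brings the circuit to the time order (upper block, chain, lower block). It cannot move the chain out of the middle: for $n=4$ the chain gate $s_2$ in layer 2 shares wire 3 with the lower-block gate $s_3$ in layer 3, and these do not commute. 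Concretely, $\pi=\sigma\,B_{\rm low}\,(\sigma^{-1}C)\,B_{\rm up}$. Here $C$ is the random chain, $\sigma$ is the chain with every gate firing, and $B_{\rm low}B_{\rm up}$ is distributed as the $M_{n-1}$ circuit. Your induction hypothesis controls only $B_{\rm low}B_{\rm up}$. As soon as the tagged particle stops early, $\sigma^{-1}C\neq e$ is a random middle factor, and the lower block acts on wires that contain the tagged particle. So neither ``the $(n-1)$-wire part is uniform'' nor ``the chain induces a fixed relabeling'' is available.

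\textbf{Step~3's invariant is false.} For $n=4$ the rows of $M_4$ are $(2,1,3,1)$, $(1,4,1,2)$, $(2,1,3,1)$. So layer 1 is $s_1,s_3$ with $m=2$; layer 2 is $s_2$ with $m=4$; layer 3 is $s_1,s_3$ with $m=3$; layer 4 is $s_2$ with $m=2$. The particle starting on wire 4 then has these laws:
\begin{itemize}
\item after layer 2, wires $(2,3,4)$ with probabilities $(3/8,1/8,1/2)$;
\item after layer 3, wires $(1,2,3,4)$ with $(1/4,1/8,3/8,1/4)$;
\item along the chain it stops on wires $4,3,2,1$ with probabilities $1/2,1/8,1/8,1/4$.
\end{itemize}
Uniformity appears only after the last layer, so no slice-by-slice uniformity can be propagated.

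\textbf{The fallback is only a restatement of the goal.} Pushing $s_i$ through $\frac1m(e+(m-1)s_j)$ with $|i-j|=1$ produces the non-adjacent transposition $s_is_js_i$. With $i=j$, the transposition is absorbed only when $m=2$.

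\textbf{What is missing is an exchange relation that moves the chain past the lower block.} One exists. Each entry of $M_n$ equals $m=b-a+1$, where $a<b$ are the particles that gate exchanges in the all-swap (reversal) network. In $\mathbb{C}[S_n]$ one has $(e+\alpha s_i)(e+(\alpha+\beta)s_{i+1})(e+\beta s_i)=(e+\beta s_{i+1})(e+(\alpha+\beta)s_i)(e+\alpha s_{i+1})$. The parameters $u=m-1=b-a$ transform consistently under braid moves, since $\alpha,\beta,\alpha+\beta$ are the pairwise gaps of the three particles involved. Normalizations are also preserved, because each braid move keeps the multiset of $m$'s. By Matsumoto's theorem you can then rewrite the brick wall as the $(n-1)$-brick wall followed by a monotone chain with $m=n,n-1,\dots,2$. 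On that form your plain insertion computation works: stopping probabilities are $1/n$, and later chain gates are absorbed by the $S_{n-1}$-uniform law. That closes the induction.
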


Although we do not presently have a formal proof, the numerical agreement with the target channel holds to machine precision in all tested cases. This motivates the following conjecture.

\begin{conjecture}
For every \(n\), the inductively constructed swap matrix \(M_n\) defines the no-swap probabilities of a quantum circuit on a linear nearest-neighbor topology that generates a uniform coherent superposition over all permutations of the \(n\) subsystems. Moreover, the resulting circuit uses exactly \(\frac{n(n-1)}{2}\) controlled-SWAP operations, has circuit depth \(n\), and requires \(\frac{n(n-1)}{2}\) ancillary control qubits.
\end{conjecture}

The depth of the quantum circuit described above is linear in \(n\), and it improves on the depth of the best-known quantum circuits for coherent uniform permutation generation \cite{Barenco1995Elementary}. Apart from the uniformity requirement, it has been shown that a symmetric-group generator must use at least \(\Omega(n\log_2 n)\) swaps, which translates into a circuit-depth lower bound of \(\Omega(\log_2 n)\) under parallel implementation. In the following section we show that, although circuit implementations of such symmetric-group generators exist, exact uniformity is impossible to achieve in the corresponding architecture.

\section{Beneš-like Quantum Circuit}

The Beneš network is an optimal rearrangeable non-blocking architecture that realizes all \(n!\) permutations using a logarithmic number of switching stages, \(2\log_2 n - 1\), and \(N\log_2 n - \frac{n}{2}\) switches \(2\times 2\) (we assume that \(n\) is a power of two). Classical randomized switching networks assign each \(2\times 2\) switch an independent fair control bit, thereby inducing a probability distribution over permutations. For the Beneš network, this induced distribution was studied by Gelman and Ta-Shma, who showed that it is \((\frac{q(q-1)}{2n})\)-almost \(q\)-set-wise independent, rather than exactly uniform over \(S_n\)~\cite{gelman2014benes}.

In any fixed switching network with \(K\) unbiased binary switches, the probability of an output permutation \(\pi\) has the form \(\Pr[\pi] = M_\pi/2^K\), where \(M_\pi\) is the number of switch configurations that realize \(\pi\). Exact uniform generation over \(S_n\) would require \(\Pr[\pi] = M_\pi/2^K\) for every \(\pi\), which is impossible for \(n\geq 3\) because \(n!\) does not
divide \(2^K\).

Motivated by the efficiency of the Beneš network, we construct a \emph{Beneš-like} quantum circuit by replacing each classical switch with a controlled-SWAP gate whose activation is mediated by a control qubit. This substitution yields coherent superpositions over permutations. However, the same counting argument shows that exact uniform coherent permutation generation cannot be achieved by simply assigning an unbiased control qubit to each
controlled-SWAP gate.

We then generalize the design by introducing the tunable parameters \(\Theta\) and employ a variational framework to test whether exact uniformity can be achieved by tuning swapping probabilities.

\begin{figure}
    \centering
    \includegraphics[width=\linewidth]{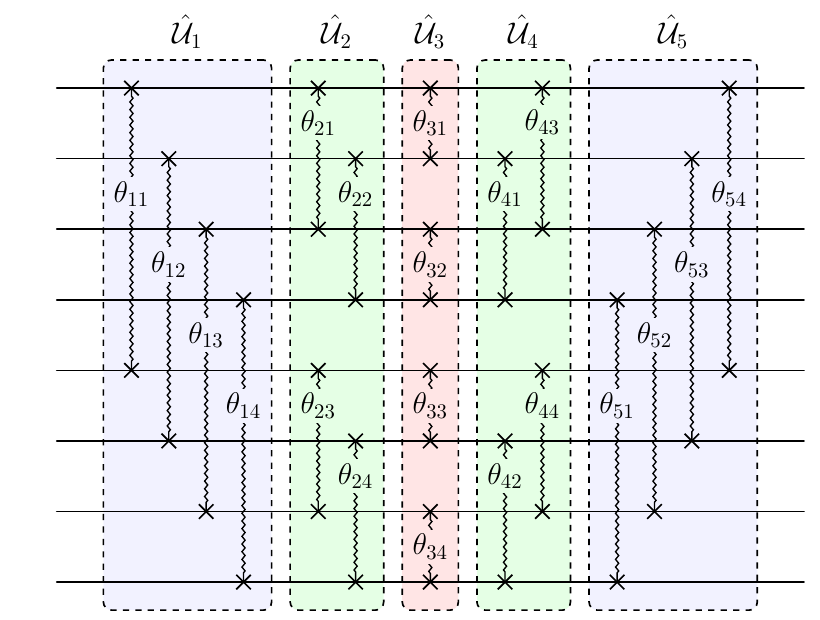}
   \caption{Quantum Beneš circuit for \(n=8\).}
    \label{fig:Benes}
\end{figure}

As in the previous quantum circuit, the control register is prepared by applying independent rotations about the Pauli-\(Y\) axis to each of the \(n\log_2 n - \frac{n}{2}\) ancillary qubits. These rotations parameterize variable superposition states that serve as free parameters for the variational optimization.

As in the previous quantum circuit, the control register is prepared by applying independent rotations about the Pauli-\(Y\) axis to each of the \(n\log_2 n - \frac{n}{2}\) ancillary qubits. These rotations parameterize variable superposition states that serve as free parameters in the variational optimization.

Analogously to its classical counterpart, the quantum Beneš circuit operates in two symmetric halves corresponding to the forward and reverse stages of the network. In the \emph{half-split} (or far-node) construction, the controlled-SWAP gates are first applied between subsystem pairs that are
maximally separated and are then progressively shifted toward nearest neighbors. In general, for \(n\) subsystems (assuming that \(n\) is a power of
two), the circuit applies a sequence of \(\log_2 n\) layers indexed by \(k = 1,2,\ldots,\log_2 n\). At layer \(k\), the controlled-SWAP gates act on all pairs \((i, i + d_k)\), where \(d_k = \frac{n}{2^{k}}\). After reaching the nearest-neighbor layer, the sequence of layers is executed in reverse order to restore the initial spatial arrangement, giving a total of \(2\log_2 n - 1\) layers. 

For example, when \(n = 4\), the first layer connects the pairs \(\{(1,3),(2,4)\}\), corresponding to a separation distance of \(d_1 = \frac{n}{2} = 2\). The second layer connects the pairs \(\{(1,2),(3,4)\}\), and the final layer again connects \(\{(1,3),(2,4)\}\). For \(n = 4\), the circuit generates all permutations in the symmetric group \(S_4\) with uniform probability. An admissible set of PCS probabilities is
\begin{equation}
    \Theta =
\left\{
\frac{1}{2},\frac{1}{2},
\frac{3-\sqrt{3}}{6},
1-\frac{3-\sqrt{3}}{6},
\frac{1}{2},\frac{1}{2}
\right\}.
\end{equation}
The total number of controlled-SWAP operations is six, which matches the count \(\frac{n(n-1)}{2}\) used in the preceding quantum circuits considered in this
work.

\begin{figure}
    \centering
    \includegraphics[width=\linewidth]{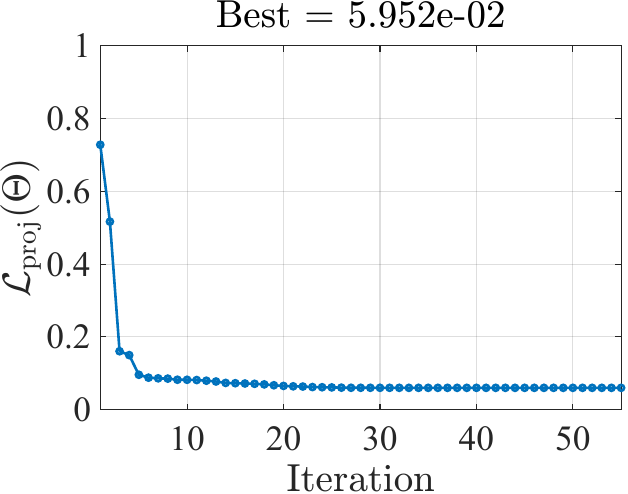}
   \caption{Optimization of the projection loss \(\mathcal{L}_{\rm proj}\) for the selected PCS swap sequence. The curve shows the best run over multiple random initializations. The finite residual indicates that the ansatz cannot exactly realize the symmetric projector, and therefore does not implement the uniform permutation average over \(S_8\).}
    \label{fig:benesopt}
\end{figure}

However, for \(n = 8\), the circuit contains \(20\) controlled-SWAP operations, and hence \(20\) PCS parameters. The first layer connects the pairs \(\{(1,5),(2,6),(3,7),(4,8)\}\), corresponding to a separation distance of \(d_1 = \frac{n}{2} = 4\). The second layer connects the pairs \(\{(1,3),(2,4),(3,5),(4,6),(5,7),(6,8)\}\) with distance \(d_2 = \frac{n}{4} = 2\). The third layer performs nearest-neighbor swaps \(\{(1,2),(3,4),(5,6),(7,8)\}\), with distance \(d_3 = \frac{n}{8} = 1\), as illustrated in Fig.~\ref{fig:Benes}. Although the circuit generates all permutations in the symmetric group \(S_8\), no admissible set of parameters yields a uniform distribution.  As shown in Fig.~\ref{fig:benesopt}, optimization of the projection loss for the swap sequence of the quantum Bene\v{s} circuit in Fig.~\ref{fig:Benes} converges to a finite residual. The plotted curve corresponds to the best run over multiple random initializations. Since the loss remains bounded away from numerical zero, the quantum Bene\v{s} circuit does not exactly realize the target symmetric projector. Consequently, the resulting circuit does not implement the uniform permutation average over \(S_8\). This leads to the following conjecture.

\begin{conjecture}
For the quantum Bene\v{s} circuit with \(n=8\), no choice of variational parameters \(\Theta\) yields the uniform distribution over \(S_8\).
\end{conjecture}

If confirmed, this fact shows that universal permutation realizability does not, by itself, imply exact uniform permutation generation. Although the quantum Bene\v{s} architecture can realize every permutation and has logarithmic depth, its variational degrees of freedom are insufficient to induce the uniform distribution. This naturally motivates the following conjecture.

\begin{conjecture}
Under locality-constrained controlled-SWAP architectures, the currently known upper bounds for exact uniform permutation generation are asymptotically optimal. In particular, any exact uniform permutation generator requires circuit size \(\Omega(n^2)\) and circuit depth \(\Omega(n)\).
\end{conjecture}

If true, this conjecture would establish that exact uniform permutation generation is asymptotically more demanding than mere permutation realizability. At present, however, it remains open whether such lower bounds can be proved, or whether a different circuit architecture beyond the Bene\v{s} family can achieve exact uniformity with complexity strictly between the current \(O(n^2)\)-size, \(O(n)\)-depth upper bounds and the information-theoretic limits for universal realizability.

\section{Conclusion}

In this work, we studied uniform permutation generation within a variational quantum circuit framework under topology constraints. We introduced a general approach in which the circuit architecture is determined by the underlying interaction graph, while the variational parameters are optimized to enforce exact uniformity over permutations. Within this framework, we provided explicit controlled-SWAP-based unitary constructions for uniform permutation generation and established rigorous bounds on their circuit complexity. In particular, for linear nearest-neighbor topologies, we obtained a construction with quadratic circuit size and linear depth, thereby removing the need for all-to-all connectivity while improving the depth of previous exact constructions by a linear factor.

Although linear topologies may appear restrictive, the construction extends to any connected architecture that admits an effective linear ordering of subsystems, for example, through a Hamiltonian path or related compact embedding~\cite{bondy2008graph}. Many current quantum platforms satisfy this condition, including superconducting lattices, heavy-hex devices, trapped-ion systems, photonic platforms, and linear ion chains. Thus, the linear circuit 
%\(\mathcal{QC}_l\) 
is implementable on a broad class of existing quantum architectures.

We also identified a fundamental limitation of Bene\v{s}-type architectures. For the quantum Bene\v{s} circuit of size \(n=8\), we concluded, after extensive exploration, that no choice of variational parameters yields the uniform distribution over \(S_8\), despite the logarithmic depth and universal routing capability of the architecture. This leads to conjecture that universal permutation realizability does not automatically imply exact uniform permutation generation.

Several questions remain open. First, it remains to determine whether, for every \(n\), the inductively defined swap matrix \(M_n\) specifies a linear nearest-neighbor quantum circuit that generates a uniform coherent superposition over all permutations in \(S_n\), while using exactly \(\frac{n(n-1)}{2}\) controlled-SWAP operations, circuit depth \(n\), and \(\frac{n(n-1)}{2}\) ancillary control qubits. Second, it is unknown whether the current \(O(n^2)\)-size, \(O(n)\)-depth upper bounds for exact uniform permutation generation are asymptotically optimal, or whether a different circuit architecture can achieve exact uniformity with asymptotically smaller resources. More generally, it remains open whether the true complexity of exact uniform permutation generation lies strictly between the current linear-depth upper bounds and the logarithmic-depth bounds associated with universal permutation realizability.

\appendix

\section{Proof of Lemma \ref{lemma:permutationI}}\label{appenidx1}

\noindent
\textbf{Lemma \ref{lemma:permutationI}}
\emph{Let \(S_n\) denote the symmetric group on \(n\) elements. Define the two sets of disjoint adjacent-swap layers 
\begin{align}
E_1&=\left\{
\prod_{j=1}^{\lfloor \frac{n}{2}\rfloor} W_j
\;\middle|\;W_j\in \{\hat{I},\hat U_{2j-1,2j}\}
\right\},\\ 
E_2 &=\left\{\prod_{j=1}^{\lfloor \frac{n-1}{2} \rfloor} W_j
\;\middle|\; W_j\in \{\hat{I},\hat U_{2j,2j+1}\} \right\}.
\end{align}
Then every permutation \(\pi\in S_n\) can be written as an element of \((E_1\cdot E_2)^k,\) with \(k=\frac n2 \) when \(n\) is even, and as an element of  \((E_1\cdot E_2)^k\cdot E_1\) with \(k= \frac{n-1}{2} \). when \(n\) is odd.
}

\begin{proof}
We first observe that every adjacent transposition \( \hat{U}_{i,i+1} \) is contained in \( E_1 \cdot E_2 \), since all odd- and even-indexed adjacent pairs are included in \( E_1 \) and \( E_2 \), respectively. Therefore, \( E_1 \cdot E_2 \) contains all individual adjacent transpositions.

To construct a general transposition \( \hat{U}_{i,j} \), we use the standard decomposition of a transposition into adjacent swaps:
\begin{equation}\label{eq: Uij decomposition}
    \hat{U}_{i,j} = \hat{U}_{i,i+1} \cdot \hat{U}_{i+1,i+2} \cdots \cdot \hat{U}_{j-1,j} \cdot \hat{U}_{j-2,j-1} \cdots \cdot \hat{U}_{i,i+1},
\end{equation}
which forms a palindromic sequence of adjacent transpositions. This decomposition requires \( 2(j-i)-1 \) adjacent swaps. Since each product \( E_1 \cdot E_2 \) can move a qubit by at most two positions, that is, from \(i\) to \(i\pm 2\), at least \( \left\lceil \frac{j-i}{2} \right\rceil \) layers are required to generate \( \hat{U}_{i,j} \). The worst case corresponds to \( \hat{U}_{1,n} \), which requires \(k \geq \left\lceil \frac{n-1}{2} \right\rceil\). This implies that once the longest transposition is generated, all other transpositions are generated along the way.

Next, we consider disjoint products of transpositions of the form \(\hat{U}_{i_1,j_1}\cdot \hat{U}_{i_2,j_2}\cdots \hat{U}_{i_r,j_r},
\) where all transpositions are pairwise disjoint. Since each transposition \( \hat{U}_{i_s,j_s} \) can be constructed as described above, and since disjointness allows them to be applied in parallel, the total number of layers required is again bounded by the depth of the deepest single transposition. This depth is at most \( \left\lceil \frac{n-1}{2} \right\rceil \). Therefore, all disjoint products of transpositions are contained in \( (E_1 \cdot E_2)^k \).

Finally, we consider \(n\)-cycles, such as \(\hat{U}_{n-1,n} \cdots \hat{U}_{2,3} \cdot \hat{U}_{1,2},\) which can be expressed as a product of adjacent transpositions. Since all transpositions can be constructed using adjacent swaps within \( \left\lceil \frac{n-1}{2} \right\rceil \) layers, this bound also suffices to build any \(n\)-cycle within \( (E_1 \cdot E_2)^k \).

Therefore, all transpositions, disjoint products of transpositions, and \(n\)-cycles—and hence all permutations in \( S_n \)—appear in \( (E_1 \cdot E_2)^k \) for \(k=\left\lceil \frac{n-1}{2} \right\rceil\).
\end{proof}

\bibliographystyle{unsrtnat}
\bibliography{ref.bib}

% \section{gg}
% As discuses in the main part of the paper, the quantum circuit is built by alternating odd and even layer. Then the unitary operator of quantum circuit, for each 

% \begin{equation}
%     \begin{split}
%        \hat{\mathcal{U}}_n=\prod_{\ell=1}^{n/2} \left(\prod_{r=1}^{n/2}\sum_{j_{r}^{(\ell)}=\{0,2r\}}\hat{U}_{j_{r}^{(\ell)},{2r+1}}\sum_{k^{(\ell)}_{r}=\{0,2r-1\}}\hat{U}_{k^{(\ell)}_{r},2r}\right)\otimes
%     \end{split}
% \end{equation}

% \clearpage

% \begin{equation}
% \hat{\mathcal{U}}_n=\prod_{\ell=1}^{n/2}\left(\prod_{r=1}^{n/2-1}\sum_{j_{r}^{(\ell)}=\{0,2r\}}\hat{U}_{j_{r}^{(\ell)},{2r+1}}\prod_{r=1}^{n/2}\sum_{k^{(\ell)}_{r}=\{0,2r-1\}}\hat{U}_{k^{(\ell)}_{r},2r}\right)\otimes\left( \bigotimes_{r=1}^{n/2-1} P_{\epsilon(j_r^{(\ell)})}\otimes\bigotimes_{r=1}^{n/2}P_{\epsilon(k_r^{\ell)})}\right)
% \end{equation}
% with
% \begin{equation}
% \epsilon(x)=
% \begin{cases}
% 0, & x=0,\\
% 1, & x\neq 0,
% \end{cases}
% \qquad
% \hat P_{\epsilon(x)}
% =
% \ket{\epsilon(x)}\bra{\epsilon(x)}.
% \end{equation}
% and \(\hat{U}_{0,r}=I\) for any \(r\).

% Each operator \(\hat{U}_{i,i+1}\) acts on the adjacent subsystems \((i,i+1)\) conditioned on its corresponding control qubit state \(\ket{j_i}\).  When \(j_i = 0\), the operation acts as the identity, and when \(j_i = 1\), it performs a swap between the two neighboring subsystems. This configuration yields a linear circuit depth of \(n\) and a total of \(n(n-1)/2\) controlled-swap gates.

% \clearpage

\end{document}